\newif\ifnotes
\definecolor{asparagus}{rgb}{0.53, 0.66, 0.42}
\definecolor{sacramentostategreen}{rgb}{0.0, 0.3, 0.15}
\definecolor{teal}{rgb}{0.0, 0.5, 0.5}
\definecolor{forestgreen}{rgb}{0.13, 0.6, 0.13}
\newtheorem{theorem}{Theorem}[section]
\newtheorem{definition}[theorem]{Definition}
\newtheorem{lemma}[theorem]{Lemma}
\newtheorem{corollary}[theorem]{Corollary}
\newtheorem*{problem*}{Problem}
\theoremstyle{remark}
\Crefname{theorem}{Theorem}{Theorems}
\Crefname{claim}{Claim}{Claims}
\Crefname{lemma}{Lemma}{Lemmas}
\Crefname{proposition}{Proposition}{Propositions}
\Crefname{corollary}{Corollary}{Corollaries}
\Crefname{definition}{Definition}{Definitions}
\newcommand{\srch}{\mathsf{search}}
\newcommand{\error}{\mathsf{ERROR}}
\newcommand{\bcert}{\mathsf{binary\_cert}}
\newcommand{\R}{\mathbb{R}}
\newcommand{\Z}{\mathbb{Z}}
\newcommand{\D}{\mathbb{D}}
\newcommand{\cA}{\mathcal{A}}
\newcommand{\cP}{\mathcal{P}}
\newcommand{\defeq}{\coloneqq}
\newcommand{\onev}{\mathbbm{1}}
\newcommand{\suchthat}{{\;\; : \;\;}}
\newcommand{\inparen}[1]{\left(#1\right)}             
\newcommand{\inbraces}[1]{\left\{#1\right\}}           
\newcommand{\insquare}[1]{\left[#1\right]}             
\newcommand{\B}{\inbraces{0,1}}
\newcommand{\indicator}[1]{\mathbbm{1}\inbraces{#1}}
\newcommand{\abs}[1]{\ensuremath{\left\lvert #1 \right\rvert}}
\newcommand{\myfunc}[3]{#1\colon#2\to#3}
\let\nfrac=\nicefrac
\renewcommand{\Pr}{\mathsf{Pr}}
\newcommand{\prvv}[2]{\underset{#1}{\Pr}\insquare{#2}}
\newcommand{\customlabel}[2]{%
   \protected@write \@auxout {}{\string \newlabel {#1}{{#2}{\thepage}{#2}{#1}{}} }%
   \hypertarget{#1}{#2}
}
\newcommand{\protocol}[3]{
    \stepcounter{figure}
    \vspace{0.15cm}
    { \small
    \begin{tcolorbox}[breakable, enhanced, colback=forestgreen!12]
    \begin{center}
    {\bf \underline{Algorithm~\customlabel{prot:#2}{\thefigure}: #1}}
    \end{center}
    
    #3
    \end{tcolorbox}
    }
}
\newcounter{casenum}
\newcounter{subcasenum}
\newcounter{casenump}
\newcommand{\casep}[2]{
    \ifthenelse{\equal{\value{casenump}}{0}}{
    \vskip.5\baselineskip\par\noindent
    }{}
    {\it Case \arabic{casenump}:} {\it #1}
    \vskip0.1\baselineskip
    \begin{addmargin}[1.5em]{1em}
    #2
    \end{addmargin}
    \addtocounter{casenump}{1}
}
\newcounter{subcasenump}
\newcommand{\xstar}{x^{\star}}
\begin{document}

\title{An Optimal Algorithm for Certifying Monotone Functions}  
\author{Meghal Gupta \thanks{E-mail:\texttt{meghal@mit.edu}}\\Microsoft Research \and Naren Sarayu Manoj\thanks{E-mail:\texttt{nsm@ttic.edu}.}\\Toyota Technological Institute Chicago}
\date{\today}

\sloppy
\maketitle
\begin{abstract}
Given query access to a monotone function $\myfunc{f}{\B^n}{\B}$ with certificate complexity $C(f)$ and an input $\xstar$, we design an algorithm that outputs a size-$C(f)$ subset of $\xstar$ certifying the value of $f(\xstar)$. Our algorithm makes $O\inparen{C(f) \cdot \log n}$ queries to $f$, which matches the information-theoretic lower bound for this problem and resolves the concrete open question posed in the STOC '22 paper of Blanc, Koch, Lange, and Tan \cite{BKLT22}.

We extend this result to an algorithm that finds a size-$2C(f)$ certificate for a real-valued monotone function with $O\inparen{C(f) \cdot \log n}$ queries. We also complement our algorithms with a hardness result, in which we show that finding the shortest possible certificate in $\xstar$ may require $\Omega\left(\binom{n}{C(f)}\right)$ queries in the worst case. 
\end{abstract}
\thispagestyle{empty}

\section{Introduction}
Given a function $\myfunc{f}{\B^n}{\D}$ for some output domain $\D$ and an input $\xstar$, is there a short proof for why $f(\xstar)$ takes on the value it does? This natural question motivates the notion of \emph{certificate complexity} in complexity theory. Loosely speaking, a certificate for $f(\xstar) = y$ is a subset of the bits of $\xstar$ that ``fixes'' the value of $f(\xstar)$. In other words, every input $x$ that agrees with $\xstar$ on the bits in the certificate will satisfy $f(x) = f(\xstar)$. Besides being a quantity of interest in complexity theory and in the analysis of Boolean functions, certificate complexity has a natural interpretation in the context of explainable AI. Here, the practitioner aims to find simple properties of a given input that explain a classifier's prediction on the input.
We formalize the notion of a certificate in Definition \ref{def:certificate}.

\begin{definition}[Certificate (see, e.g., \cite{arorabarak})]
\label{def:certificate}
Let $x\vert_S$ denote the substring of $x$ in the coordinates of $S$.

For a function $f:\B^n\to \D$ and an input $\xstar\in\B^n$, we say a set $S \subseteq [n]$ is a certificate if for all $y\in\B^n$ such that $\xstar\vert_S=y\vert_S$, we have $f(\xstar)=f(y)$.
\end{definition}

We use Definition \ref{def:certificate} to define the certificate complexity of a function $f$.

\begin{definition}[Certificate complexity (see, e.g., \cite{arorabarak})]
\label{def:cert_complexity}
For any function $\myfunc{f}{\B^n}{\D}$ and $x \in \B^n$, we let $C(f,x)$ be the smallest integer such that there exists a $C(f,x)$-sized certificate for $f(x)=j$. We now let the certificate complexity of $f$ be $\max\limits_{x\in\B^n} C(f,x)$.
\end{definition}

A natural follow-up question from Definition \ref{def:cert_complexity} is whether a short certificate can be found in a given input if we know that all inputs have a short certificate. The following problem, posed and studied in the STOC '22 paper of \cite{BKLT22}, formalizes this question.

\begin{problem*}
    Given queries to a function $f : \B^n \to \B$ with certificate complexity $C(f)$ and an input $\xstar$, output a size-$C(f)$ certificate for $f$’s value on $\xstar$.
\end{problem*}

The main result of \cite{BKLT22} is an algorithm for the case where $f$ is monotone and the output range $\D=\B$. The authors design a randomized algorithm that makes at most $O\inparen{C(f)^8\cdot\log{n}}$ queries using a novel connection to threshold phenomena. Furthermore, \cite{BKLT22} show that $\Omega(C(f)\cdot\log{n})$ queries for the certification problem are necessary in the worst-case. The authors identify closing this gap as a concrete direction for future work.

\subsection{Our Results}

Our main result is a simple, deterministic algorithm that makes $O(C(f)\cdot\log{n})$ queries to find a size-$C(f)$ certificate for any monotone binary-valued function $f$ and input $\xstar$. This completely resolves the aforementioned open question from \cite{BKLT22}. Formally, we have Theorem \ref{thm:main_result_intro}.

\begin{theorem}
\label{thm:main_result_intro}
Given query access to a monotone function $\myfunc{f}{\B^n}{\B}$ and an input $\xstar$, there exists an algorithm that makes $O(C(f)\cdot\log n)$ queries to $f$ and outputs a size-$C(f)$ subset $S$ corresponding to a subset of indices of $\xstar$ certifying the value of $f(\xstar)$.
\end{theorem}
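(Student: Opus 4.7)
My plan is to build the certificate via a sequence of ``batch removal'' binary searches driven by monotonicity. Assume $f(x^*)=1$; the case $f(x^*)=0$ reduces to this by replacing $f$ with the monotone dual $g(x)\defeq 1-f(\mathbf{1}-x)$ and $x^*$ with $\mathbf{1}-x^*$. Because $f$ is monotone, any $1$-certificate of $x^*$ may be taken to lie inside $\mathrm{supp}(x^*)$, and a set $S\subseteq\mathrm{supp}(x^*)$ certifies $f(x^*)=1$ iff $f(\mathbf{1}_S)=1$. The crucial observation is that if $S$ is a \emph{minimal} such set (no proper subset of $S$ still has $f(\mathbf{1}_{\cdot})=1$), then $S$ is the only $1$-certificate of the $1$-input $\mathbf{1}_S$, so $|S|=C(f,\mathbf{1}_S)\le C(f)$. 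Thus it suffices to reduce $\mathrm{supp}(x^*)$ to \emph{any} minimal sub-support on which $f$ evaluates to $1$.

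To do this in $O(C(f)\log n)$ queries, list $\mathrm{supp}(x^*)=\{a_1<a_2<\cdots<a_m\}$ in the natural order and maintain a pivot index $p$ together with a current set $S\subseteq\mathrm{supp}(x^*)$, initialized to $p=1$ and $S=\mathrm{supp}(x^*)$. In each iteration I will binary-search on $k\in[p,m+1]$ for the largest $k$ satisfying $f(\mathbf{1}_{S\setminus\{a_p,\ldots,a_{k-1}\}})=1$; this is well-defined because, by monotonicity, removing more indices only switches $f$ from $1$ to $0$, so the answer is non-increasing in $k$ and binary search uses $O(\log n)$ queries. I then drop $a_p,\ldots,a_{k-1}$ from $S$ and set $p\gets k+1$, so that if $k\le m$ the coordinate $a_k$ becomes ``committed'' to the final output. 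Each iteration either commits a new coordinate or terminates, so the number of iterations is at most $|S_{\mathrm{final}}|+1$.

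The heart of the proof is showing that the returned $S_{\mathrm{final}}=\{a_{k_1},\ldots,a_{k_J}\}$ is actually a minimal $1$-input of $f$, so the observation above caps its size by $C(f)$. Suppose toward contradiction some committed $a_{k_j}$ were redundant, i.e., $f(\mathbf{1}_{S_{\mathrm{final}}\setminus\{a_{k_j}\}})=1$. The binary search at iteration $j$ succeeded at $k=k_j$ but failed at $k=k_j+1$, meaning it recorded $f(\mathbf{1}_{\{a_{k_1},\ldots,a_{k_{j-1}}\}\cup\{a_{k_j+1},\ldots,a_m\}})=0$. Since every $a_{k_i}$ with $i>j$ satisfies $k_i>k_j$, the set $S_{\mathrm{final}}\setminus\{a_{k_j}\}$ is contained in $\{a_{k_1},\ldots,a_{k_{j-1}}\}\cup\{a_{k_j+1},\ldots,a_m\}$, and monotonicity of $f$ then forces $f(\mathbf{1}_{S_{\mathrm{final}}\setminus\{a_{k_j}\}})=0$, a contradiction. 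Hence $S_{\mathrm{final}}$ is minimal, $|S_{\mathrm{final}}|\le C(f)$, and the total query count is $O(|S_{\mathrm{final}}|\cdot\log n)=O(C(f)\log n)$.

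The step I expect to require the most care is this minimality argument, because it truly depends on the ordered ``pivot + prefix'' structure of the algorithm. A more naive greedy that just adds or removes coordinates one at a time in an unstructured order can produce a non-minimal $S$ whose size exceeds $C(f)$; for example with $f(x)=(x_1\wedge x_2)\vee(x_3\wedge x_4)$ and $x^*=1111$, committing coordinates from different minterms can yield $S=\{1,2,4\}$ even though $C(f)=2$. It is precisely the ``remove a contiguous prefix from the pivot onward'' rule that makes monotonicity plus one subset inclusion close the minimality proof in a single line.
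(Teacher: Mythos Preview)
Your proposal is correct and is essentially the paper's own argument. The paper's algorithm likewise reduces to the $f(\xstar)=1$ case, uses the same key lemma that any \emph{minimal} $1$-subset $S\subseteq\mathrm{supp}(\xstar)$ satisfies $|S|\le C(f)$, and finds such a minimal $S$ by committing one coordinate per $O(\log n)$-query binary search; its minimality proof is exactly your inclusion-plus-monotonicity step. The only difference is cosmetic: the paper scans the support from high indices to low (building $A$ up from $\emptyset$ by finding the smallest prefix that, together with $A$, evaluates to $1$), whereas you scan low to high (whittling $S$ down by removing the longest removable prefix past the pivot); on inputs with multiple minimal certificates the two variants may return different ones, but the analysis and query bound are identical.
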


We can extend our result to obtain as a simple corollary an algorithm that finds a size-$2C(f)$ certificate for any monotone \textbf{real}-valued function $f$ and input $\xstar$. Specifically, we have Theorem \ref{thm:main_result_reals_intro}.

\begin{theorem}
\label{thm:main_result_reals_intro}
Given query access to a monotone function $\myfunc{f}{\B^n}{\R}$ and an input $\xstar$,there exists an algorithm that makes $O(C(f)\cdot \log n)$ queries to $f$ and outputs a size-$2C(f)$ subset $S$ corresponding to a subset of indices of $\xstar$ certifying the value of $f(\xstar)$.
\end{theorem}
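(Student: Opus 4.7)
The plan is to reduce to two applications of Theorem~\ref{thm:main_result_intro}, one for each direction of the equality $f(\xstar) = v$. First I would query $f$ at $\xstar$ to learn $v \defeq f(\xstar)$. Observe that a set $S$ certifies $f(\xstar) = v$ if and only if every $y$ agreeing with $\xstar$ on $S$ satisfies both $f(y) \ge v$ and $f(y) \le v$, so producing one certificate for each direction and taking the union suffices.

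For the lower side, define the Boolean function $f^{+}(x) \defeq \mathbbm{1}\inbraces{f(x) \ge v}$. Monotonicity of $f$ makes $f^{+}$ a monotone non-decreasing Boolean function with $f^{+}(\xstar) = 1$, and any set fixing $f$'s value at an input also fixes whether $f \ge v$ there, so $C(f^{+}) \le C(f)$. For the upper side, define $\tilde{f}(x) \defeq \mathbbm{1}\inbraces{f(\overline{x}) \le v}$, where $\overline{x}$ is the bitwise complement. Since $f$ is monotone non-decreasing, $\tilde{f}$ is monotone non-decreasing in $x$ (flipping $0 \to 1$ in $x$ flips $1 \to 0$ in $\overline{x}$, which can only decrease $f(\overline{x})$), and $\tilde{f}(\overline{\xstar}) = 1$. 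The same argument as before yields $C(\tilde{f}) \le C(f)$. Each query to $f^{+}$ or $\tilde{f}$ costs a single query to $f$ on either the raw or bit-flipped input.

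Applying Theorem~\ref{thm:main_result_intro} to $f^{+}$ at $\xstar$ produces a certificate $S^{+}$ of size $C(f^{+}) \le C(f)$ for $f(\xstar) \ge v$ using $O(C(f) \cdot \log n)$ queries. Applying Theorem~\ref{thm:main_result_intro} to $\tilde{f}$ at $\overline{\xstar}$ produces a set $\tilde{S}$ of size $C(\tilde{f}) \le C(f)$ certifying $\tilde{f}(\overline{\xstar}) = 1$; unpacking the bit-flip, $\tilde{S}$ viewed as a subset of coordinates also certifies $f(\xstar) \le v$ on inputs agreeing with $\xstar$ on $\tilde{S}$, at an additional cost of $O(C(f) \cdot \log n)$ queries. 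Returning $S^{+} \cup \tilde{S}$ yields a size-$2C(f)$ certificate for $f(\xstar) = v$. There is no genuine obstacle beyond carefully setting up the two reductions, since all of the algorithmic work is already done by Theorem~\ref{thm:main_result_intro}; the factor of two appears to be intrinsic to this approach, as monotonicity provides one-sided control of $f$ per revealed bit, forcing the two thresholds to be certified separately.
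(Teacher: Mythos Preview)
Your proposal is correct and follows essentially the same approach as the paper: reduce to two Boolean threshold functions (one for $f\ge v$, one for $f\le v$), show each has certificate complexity at most $C(f)$, apply Theorem~\ref{thm:main_result_intro} to each, and return the union. The paper phrases this by defining two monotone Boolean functions $g_0,g_1$ (tie-breaking the event $f(x)=v$ to $0$ and $1$ respectively) and invoking $\bcert$ on both at $\xstar$; your explicit bit-complement for the upper side is exactly what $\bcert$ does internally when its input evaluates to $0$, so the two presentations coincide.
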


The careful reader might also wonder why we are only looking for a certificate of size-$C(f)$ on the input $\xstar$ -- by definition, the shortest certificate on a fixed input $\xstar$ is size-$C(f,\xstar)$. We show that finding a certificate of length $C(f,\xstar)$ may require far more queries than simply finding one of length $C(f)$. In particular, in the case where $C(f,\xstar) = \nfrac{n}{2}$, it may require exponentially many queries. Moreover, our result matches the trivial upper bound provided by an algorithm which simply queries all $C(f,\xstar)$-size certificates. See Theorem \ref{thm:lower_bound_intro}.

\begin{theorem}
\label{thm:lower_bound_intro}
For any $k$, for any (randomized) algorithm that queries a given function, there exists a function $f$ and input $\xstar$ such that $k=C(f,\xstar)$, and the algorithm must make at least $\frac12\binom{n}{k}$ queries to determine a certificate with probability $>\nfrac{1}{2}$.
\end{theorem}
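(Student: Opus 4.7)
The plan is to invoke Yao's minimax principle, reducing the randomized lower bound to a distributional lower bound against deterministic algorithms. I will exhibit a distribution $\mu$ over pairs $(f, \xstar)$ with $C(f, \xstar) = k$ almost surely, such that every deterministic algorithm making $q < \frac{1}{2}\binom{n}{k}$ queries outputs a valid size-$k$ certificate on $\xstar$ with probability at most $\frac{1}{2}$ under $\mu$.

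I fix $\xstar = 0^n$, draw $S$ uniformly from $\binom{[n]}{k}$, and define
\[
f_S(y) \;=\; \mathbbm{1}\bigl[|\supp(y)| = n - k \text{ and } \supp(y) \neq [n] \setminus S\bigr];
\]
that is, $f_S$ is the indicator of the Hamming-weight-$(n-k)$ slice with the single input $\mathbf{1}_{[n]\setminus S}$ removed. The first step of the proof is to verify that $C(f_S, 0^n) = k$ with unique minimum certificate $T = S$. A set $T$ certifies $f_S(0^n) = 0$ precisely when every $(n-k)$-subset of $[n] \setminus T$ equals $[n] \setminus S$: if $|T| > k$ there are no such subsets and the condition is vacuous; if $|T| = k$ the unique $(n-k)$-subset of $[n] \setminus T$ is $[n] \setminus T$ itself, forcing $T = S$; and if $|T| < k$ there are multiple such subsets, so certification fails. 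Hence the unique size-$k$ certificate is $S$.

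The heart of the argument is a decision-tree analysis. Any query $y$ with $|\supp(y)| \neq n - k$ returns $0$ regardless of $S$, so I may assume every query has the form $\mathbf{1}_R$ for some $(n-k)$-subset $R \subseteq [n]$; such a query returns $\mathbbm{1}[R \neq [n] \setminus S]$. Since the "$0$" branch at any such query is reached by the unique value $S = [n] \setminus R$, a deterministic algorithm's decision tree is a linear chain along the "$1$" branches, terminating either at one of the $q$ possible "found-$S$" leaves (each correct on exactly one $S$) or at the single "all-$1$" leaf, where the algorithm must commit to a fixed guess correct on at most one further $S$. Thus the algorithm is correct on at most $q + 1$ out of $\binom{n}{k}$ values of $S$, and its success probability is at most $(q + 1)/\binom{n}{k} \leq \frac{1}{2}$ whenever $q < \frac{1}{2}\binom{n}{k}$.

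The step I expect to require the most care is the certificate-complexity verification---specifically the case analysis on $|T|$---which must pin down both that no set of size strictly less than $k$ is a certificate and that the unique size-$k$ certificate is $S$. Once that is in place, the decision-tree counting follows immediately from the observation that the "$0$" branch of every informative query singles out a unique value of $S$.
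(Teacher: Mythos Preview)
Your argument is correct and follows essentially the same strategy as the paper: hide a uniformly random ``needle'' in a single Hamming slice so that the unique size-$k$ certificate is the hidden set, then argue that any algorithm succeeding with probability $>\nfrac{1}{2}$ must probe at least half the slice. The paper phrases this as a direct averaging argument over the randomized algorithm's query order (viewed as a distribution on permutations of the weight-$k$ slice, with $\xstar=\onev^n$), while you invoke Yao and count leaves of a deterministic decision tree; these are equivalent formulations of the same bound.

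The one substantive difference worth flagging is that your family $f_S$ is \emph{not monotone}: for $\abs{R}=n-k$ with $R\neq[n]\setminus S$ you have $f_S(\mathbf{1}_R)=1$ but $f_S(\mathbf{1}_{R\cup\{j\}})=0$ for any $j\notin R$. The paper's family, by contrast, sets $f_P(x)=1$ iff $\abs{S_x}>k$ or $S_x=P$, which is monotone. Since the surrounding section is explicitly about shortest certificates for monotone functions, the paper's construction delivers the stronger conclusion that the lower bound holds even when the algorithm is promised a monotone input; your construction, as written, does not. The fix is immediate: replace your $f_S$ by the monotone $g_S(y)=\indicator{\abs{\supp(y)}>n-k \text{ or } \bigl(\abs{\supp(y)}=n-k \text{ and } \supp(y)\neq[n]\setminus S\bigr)}$, and both your certificate-complexity verification and your decision-tree count go through verbatim.
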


\subsection{Related Work}

We derive our setting and problem statements from the work of \cite{BKLT22}. The authors of \cite{BKLT22} formally propose the problem of certifying a monotone function $f$ on an input $\xstar$ and provide an algorithm for doing so, as mentioned earlier. They also look at the certification question for a general (non-monotone) function $f$. Here, they show $\Omega\inparen{2^{C(f)}+C(f)\cdot\log{n}}$ queries are necessary, and $O\inparen{2^{C(f)}\cdot C(f)\cdot\log{n}}$ queries suffice with high probability. Closing this gap remains an interesting open direction.

Before the work of \cite{BKLT22}, Angluin (see \cite{angluin}) gave a local search algorithm that can be used to certify a monotone function $f$ on an input $\xstar$ with query complexity $O(n)$. See Appendix C in \cite{BKLT22} for a detailed exposition and proof of correctness of Angluin's algorithm.

\section{Preliminaries}

\paragraph{Notation} In this work, we use the following notation.
\begin{itemize}
\item We denote the set $\inbraces{x \in \Z_{\ge 0} \suchthat 1 \le x \le n}$ as $[n]$. In an abuse of notation, let $[0] = \emptyset$.
\item For a set $S \subseteq [n]$, we write $x_S$ to be the indicator vector for $S$; i.e., $x_S$ is such that $x_i = \indicator{i \in S}$, for all $i \in [n]$. Additionally, we write $x\vert_S$ to be the substring of $x$ in the coordinates of $S$. Specifically, we have $x\vert_S = \inbraces{(i, x_i) \text{ for all } i \in S}$.
\item Let $\onev^n$ denote the all-$1$s vector in $n$ dimensions.
\item For a vector $x\in \{0,1\}^n$, we denote $S_{x}$ to be $\inbraces{i \suchthat x_i=1}$.
\end{itemize}

In our work, it is helpful to distinguish a \emph{minimal} certificate from a general certificate.

\begin{definition}[Minimal Certificate]
\label{def:minimal_certificate}
For a given function $f$, we say a certificate $S \subseteq [n]$ is \emph{minimal} if for all $a\in S$, we have that $S \setminus a$ is not a certificate for $f(x)$. 

If $f$ is monotone, this is equivalent to requiring that for all $A\subset S$, we have $f(x\vert_A)\neq f(x\vert_S)$.
\end{definition}

Finally, we note the information-theoretic lower bound from \cite{BKLT22} on the query complexity of any algorithm used to certify $f(x)$ for a monotone function $f$.

\begin{lemma}[Claim 1.2 in \cite{BKLT22}]
For any $c < 1$ and any $k\le l \le n^c$, let $\cA$ be an algorithm which, given query access to a monotone function $\myfunc{f}{\B^n}{\B}$ with certificate complexity $\le k$ and an input $\xstar$, returns a size-$l$ certificate for $f$'s value on $\xstar$ with high probability. The query complexity of $\cA$ must be $\Omega\inparen{k\log n}$.  
\end{lemma}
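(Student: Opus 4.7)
The plan is to exhibit a family of hard instances parametrized by a hidden $k$-subset of $[n]$, then apply a counting argument. For each $T \subseteq [n]$ with $|T| = k$, define $f_T(x) = \boolbigand_{i \in T} x_i$. Each $f_T$ is monotone with certificate complexity exactly $k$: at $\xstar = \onev^n$ the unique minimal certificate for $f_T(\xstar) = 1$ is $T$ itself, and at any input $x$ with $T \not\subseteq S_x$, any single index $i \in T \setminus S_x$ already certifies $f_T(x) = 0$. The crucial observation is that at $\xstar = \onev^n$, a size-$l$ certificate for $f_T(\xstar)=1$ is precisely a size-$l$ superset of $T$, so the algorithm's task reduces to recovering a size-$l$ set that contains the hidden $T$.

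Next, I reduce the randomized case to a deterministic one via Yao's principle. If the randomized algorithm succeeds with probability $> \tfrac{1}{2}$ on every input, then when $T$ is drawn uniformly from the $k$-subsets of $[n]$ and the algorithm is executed on $\xstar = \onev^n$, its success probability exceeds $\tfrac{1}{2}$ when averaged over both $T$ and its internal randomness. By averaging over the random tape, there exists a deterministic algorithm with the same query budget that succeeds on a strict majority of choices of $T$.

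For the deterministic algorithm, after $q$ adaptive queries the transcript of answers is a string in $\B^q$, so the algorithm emits at most $2^q$ distinct candidate outputs $S_1, \ldots, S_{2^q}$, each of size $l$. Each $S_j$ contains at most $\binom{l}{k}$ of the candidate targets $T$, so the total number of $T$'s on which the algorithm succeeds is at most $2^q \cdot \binom{l}{k}$. Requiring this to exceed $\tfrac{1}{2}\binom{n}{k}$ forces
\[
q \ge \log_2 \inparen{\binom{n}{k}/\binom{l}{k}} - 1.
\]
Using the elementary bound $\binom{n}{k}/\binom{l}{k} = \prod_{i=0}^{k-1} (n-i)/(l-i) \ge (n/l)^k$ together with $l \le n^c$, we obtain $q \ge (1-c)\,k\log_2 n - O(1) = \Omega(k\log n)$.

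The argument is essentially a direct Fano-style counting bound, and I do not expect any serious obstacle. The main modelling choice is the function family: one needs a clean correspondence between a hidden parameter and the valid certificates, which the AND-functions handily provide while also realizing the certificate-complexity bound with equality.
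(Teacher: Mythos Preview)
Your proof is correct. The paper itself does not prove this lemma; it merely quotes it as Claim~1.2 of \cite{BKLT22} and moves on, so there is no in-paper argument to compare against. Your construction---the AND-of-a-hidden-$k$-subset family $f_T$, with $\xstar=\onev^n$---is the natural hard distribution here, and the counting/Yao argument you give is the standard way to extract the $\Omega(k\log n)$ bound: at most $2^q$ possible outputs, each a size-$l$ set covering at most $\binom{l}{k}$ targets, versus $\binom{n}{k}$ equiprobable targets, together with $\binom{n}{k}/\binom{l}{k}\ge (n/l)^k\ge n^{(1-c)k}$. All the small steps check out (in particular, a $1$-certificate for $f_T(\onev^n)=1$ must contain $T$, and $C(f_T)=k$ exactly). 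This is almost certainly the same argument \cite{BKLT22} have in mind.
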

\section{Our Algorithm to Certify a Binary Monotone Function}\label{sec:bin-search}

We first restate the problem.

\begin{problem*}
Given query access to a monotone function $\myfunc{f}{\B^n}{\B}$ with certificate complexity $C(f)$ and an input $\xstar$, output a size-$C(f)$ certificate for $f$’s value on $\xstar$.
\end{problem*}

\subsection{Overview of Our Algorithm}

We informally describe our algorithm. Without loss of generality, we let $\xstar$ be such that $f(\xstar)=1$. A valid certificate is any subset $S\subset S_{\xstar}$ of indices such that $f(x_S)=1$.

We add elements into our certificate $A$ one-by-one. To do this, we simply iterate the following steps until $A$ is a valid certificate:
\begin{enumerate}
    \item Find the smallest $s\in S_{\xstar}$ such that including all $i\leq s \in S_{\xstar}$ in the certificate, along with elements already in $A$, yields a valid certificate.
    \item Add $s$ to $A$.
\end{enumerate}

Because the function is monotone, $s$ can be found through binary search at each step. Moreover, observe that removing any one element from $A$ no longer yields a valid certificate; thus, as we will show in Lemma~\ref{lemma:cf_minimal_certificate}, the output certificate is length at most $C(f)$. This also implies the algorithm makes a total of $O(C(f)\cdot\log{n})$ queries.

\subsection{Formal Description of Our Algorithm}\label{algo}

We state our algorithm formally. In our algorithm description and analysis, we assume without loss of generality that $f(\xstar)=1$. We can make this assumption since if $f(\xstar)=0$, we can instead run the algorithm making queries to $g(x) \defeq 1-f(\onev^n-x)$, which is a monotone function with $g(\xstar)=1$.

\begin{definition}[$\srch$]
\label{def:search}
The procedure $\srch(f,A,S)$ acts on a monotone function $f\in \inbraces{0,1}^n\to \inbraces{0,1}$, and two sets $A,S\subseteq [n]$. If $f(x_A)=1$ or $f(x_{A\cup S})=0$, it outputs $\error$. Else, it outputs the smallest $s\in S$ for which $f(x_{A\cup([s]\cap S)})=1$. The function proceeds using binary search, which can be done because $f$ is monotone.
\end{definition}

\protocol{Algorithm to Certify a Binary Monotone Function Where $f(\xstar) = 1$}{s33d}{
\begin{enumerate}
    \item \textbf{Input:} Query access to a function $\myfunc{f}{\inbraces{0,1}^n}{\inbraces{0,1}}$ and point $\xstar$ for which $f(\xstar) = 1$.
    \item Initialize the sets $A\gets \emptyset$ and $S\gets S_{\xstar}$
    \item\label{binary-step} Run the following procedure until $f(x_A)=1$:
    \begin{enumerate}
        \item\label{search-step} Set $s\gets \srch(f,A,S)$.
        \item Add $s$ to $A$.
        \item Set $S \gets S\cap [s-1]$
    \end{enumerate}
    \item \textbf{Output:} $A$.\label{line:output}
\end{enumerate}
}

\subsection{Analysis}

\begin{theorem}\label{mainthm}
The Algorithm in \ref{algo} outputs a certificate of length at most $C(f)$ for $f$ on $\xstar$ and makes at most $O(C(f)\cdot \log{n})$ queries.
\end{theorem}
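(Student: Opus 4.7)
The plan is to verify three things about Algorithm~\ref{prot:s33d}: that it terminates with a valid certificate, that the output $A$ has size at most $C(f)$, and that the total number of queries is $O(C(f) \cdot \log n)$. Since $\srch$ is just a binary search using the monotonicity of $f$, each call costs $O(\log n)$ queries; so once I bound the number of loop iterations by $C(f) + O(1)$, the query complexity will follow immediately.

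First I would track the invariant $f(x_{A_t \cup S_t}) = 1$ through the main loop, writing $A_t, S_t$ for the values of $A, S$ after iteration $t$ and $s_t$ for the index added in iteration $t$. This invariant holds at $t = 0$ since $A_0 \cup S_0 = S_{\xstar}$, and it is preserved because the definition of $\srch$ guarantees $f(x_{A_{t-1} \cup ([s_t] \cap S_{t-1})}) = 1$, and this set coincides with $A_t \cup S_t$. Combined with the loop guard $f(x_A) = 0$, the invariant ensures $\srch$ never returns $\error$ during any iteration. Once the loop exits with $f(x_A) = 1$ and $A \subseteq S_{\xstar}$, any $y$ agreeing with $\xstar$ on $A$ satisfies $y \ge x_A$ coordinate-wise, and monotonicity yields $f(y) \ge f(x_A) = 1$, certifying the value of $f(\xstar)$.

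The bulk of the argument is the bound $|A| \le C(f)$ via minimality. Writing the output as $A = \{s_1, \ldots, s_T\}$ with $s_1 > \cdots > s_T$ (since each iteration restricts $S$ to $[s_t - 1]$), I would show that $f(x_{A \setminus \{s_j\}}) = 0$ for every $j$. The case $j = T$ is exactly the loop-entry condition immediately before iteration $T$. For $j < T$, the later indices $s_{j+1}, \ldots, s_T$ all lie in $S_j \subseteq [s_j - 1] \cap S_{j-1}$, so
\[
A \setminus \{s_j\} \;\subseteq\; A_{j-1} \cup \bigl([s_j - 1] \cap S_{j-1}\bigr).
\]
Since $s_j$ was chosen as the \emph{smallest} element of $S_{j-1}$ for which $f(x_{A_{j-1} \cup ([s] \cap S_{j-1})}) = 1$, monotonicity in $s$ forces $f(x_{A_{j-1} \cup ([s_j - 1] \cap S_{j-1})}) = 0$, and monotonicity of $f$ then upgrades this to $f(x_{A \setminus \{s_j\}}) = 0$. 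Hence $A$ is a minimal $1$-certificate. For monotone $f$ this element-wise minimality actually implies $f(x_{A'}) = 0$ for all $A' \subsetneq A$, so $A$ is the minimum certificate for the input $x_A$, giving $|A| = C(f, x_A) \le C(f)$.

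Finally, each iteration uses one query to evaluate $f(x_A)$ plus $O(\log n)$ queries inside $\srch$, and there are at most $|A| + 1 \le C(f) + 1$ iterations, for a total of $O(C(f) \cdot \log n)$ queries. The main obstacle I anticipate is the minimality step: one has to carefully use the fact that the restriction $S \gets S \cap [s_t - 1]$ pushes every later index into $[s_j - 1] \cap S_{j-1}$, which is precisely the containment needed to invoke the defining property of $\srch$ and conclude $f(x_{A \setminus \{s_j\}}) = 0$.
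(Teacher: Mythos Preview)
Your proposal is correct and follows essentially the same approach as the paper: you establish the invariant $f(x_{A\cup S})=1$ to rule out $\error$, prove the output is a minimal certificate by showing $f(x_{A\setminus\{s_j\}})=0$ via the defining property of $\srch$ and the containment $A\setminus\{s_j\}\subseteq A_{j-1}\cup([s_j-1]\cap S_{j-1})$, and then use minimality together with monotonicity to conclude $|A|\le C(f)$. The paper organizes the same steps into four separate lemmas, but the content and logic are the same.
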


We break the proof down into a series of lemmas.

\begin{lemma}
\label{lemma:cf_minimal_certificate}
If $S$ is a minimal certificate, then $\abs{S} \le C(f)$.
\end{lemma}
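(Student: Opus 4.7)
The plan is to show that for any input $x$ and any minimal certificate $S$ for $f(x)$, the input $y := x_S$ (the indicator vector of $S$) has $C(f, y) \ge |S|$; combined with $C(f) \ge C(f, y)$ by definition of $C(f)$, this yields $|S| \le C(f)$.

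By the dual trick from the algorithm description (replacing $f$ with $g(z) = 1 - f(\onev^n - z)$ if necessary), I may assume $f(x) = 1$. The monotone form of minimality in Definition \ref{def:minimal_certificate} then tells me that $f(x_S) = 1$ and $f(x_A) = 0$ for every $A \subsetneq S$. A short observation also shows that every $i \in S$ must have $x_i = 1$: if some $i \in S$ had $x_i = 0$, then by monotonicity $S \setminus \{i\}$ would still be a certificate for $f(x)=1$, contradicting minimality. Consequently $y = x_S$ agrees with $x$ on all of $S$, and in particular $f(y) = 1$.

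To show $C(f, y) \ge |S|$, let $T$ be any certificate for $f(y)$. By monotonicity, the pointwise-smallest input agreeing with $y$ on $T$ has $1$s exactly on $T \cap S_y = T \cap S$ and $0$s elsewhere; in the paper's notation this input is $x_{T \cap S}$. Since $T$ certifies $f(y) = 1$, it follows that $f(x_{T \cap S}) = 1$. But the minimality of $S$ forces $f(x_{T'}) = 0$ for every $T' \subsetneq S$, so $T \cap S = S$, which gives $|T| \ge |T \cap S| = |S|$. Minimizing over $T$ yields $C(f, y) \ge |S|$, as desired.

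The main subtlety is precisely this monotonicity-based reduction of an arbitrary certificate $T$ on $y$ to the intersection $T \cap S$; it is what converts the question of bounding $|T|$ into the minimality hypothesis applied to subsets of $S$. Everything else is bookkeeping: handling the $f(x) = 0$ case via the dual, and observing $S \subseteq S_x$ so that $x_S$ is the correct canonical ``witness'' input.
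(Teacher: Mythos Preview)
Your proof is correct and follows essentially the same approach as the paper: pass to the input $y=x_S$ and argue that every certificate for $f(y)=1$ must contain all of $S$, so $C(f)\ge C(f,y)\ge |S|$. You are in fact more careful than the paper, which asserts $f(x_C)=f(x_S)$ for the shortest certificate $C$ of $x_S$ without explicitly noting that one may take $C\subseteq S$; your $T\cap S$ reduction makes that step explicit.
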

\begin{proof}
Consider the shortest certificate $C$ for the input $x_S$. We must have $f(x_C)=f(x_S)$, and $|C|\leq C(f)$. The fact that $|C|<|S|$ and $f(x_C)=f(x_S)$ contradicts that $S$ is minimal.
\end{proof}

\begin{lemma}\label{no-error}
The Algorithm in \ref{algo} never outputs $\error$.
\end{lemma}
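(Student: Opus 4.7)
The plan is to show that neither of the two conditions that cause $\srch$ to output $\error$ is ever triggered during the algorithm's execution. Recall from Definition \ref{def:search} that $\srch(f,A,S)$ outputs $\error$ precisely when either $f(x_A) = 1$ or $f(x_{A \cup S}) = 0$. The procedure is only invoked in Step \ref{search-step}, which is inside the while loop of Step \ref{binary-step}, and the loop condition guarantees $f(x_A) = 0$ at every call. So the first error condition cannot be triggered; it remains to handle the second.

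For the second condition, I would prove by induction on the number of iterations of the loop that the invariant $f(x_{A \cup S}) = 1$ holds at the start of every iteration. The base case is immediate: initially $A = \emptyset$ and $S = S_{\xstar}$, so $x_{A \cup S} = \xstar$, and $f(\xstar) = 1$ by assumption.

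For the inductive step, suppose the invariant holds at the top of an iteration, so the call to $\srch(f,A,S)$ does not return $\error$ and hence returns a valid $s \in S$ with $f(x_{A \cup ([s]\cap S)}) = 1$. Let $A' = A \cup \{s\}$ and $S' = S \cap [s-1]$ denote the updated sets. Since $s \in S$, we have the set equality
\[
A' \cup S' \;=\; A \cup \{s\} \cup (S \cap [s-1]) \;=\; A \cup ([s] \cap S),
\]
and therefore $f(x_{A' \cup S'}) = f(x_{A \cup ([s]\cap S)}) = 1$, which reestablishes the invariant before the next iteration begins.

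The main obstacle, such as it is, lies in identifying the correct invariant and verifying that the particular update rule (adding the "threshold" index $s$ to $A$ while truncating $S$ to $[s-1]$) preserves the union $A \cup S$ in exactly the right way. Once the set identity $A \cup \{s\} \cup (S \cap [s-1]) = A \cup ([s]\cap S)$ is noticed, everything else follows directly from the defining property of $s$ given by $\srch$ and monotonicity is not even needed here (it was only used to make $\srch$ implementable by binary search).
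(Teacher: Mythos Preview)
Your proof is correct and follows essentially the same approach as the paper: rule out $f(x_A)=1$ via the loop guard, then maintain the invariant $f(x_{A\cup S})=1$ using the set identity $A\cup\{s\}\cup(S\cap[s-1])=A\cup([s]\cap S)$. The only cosmetic difference is that you phrase it explicitly as an induction with a named invariant, whereas the paper splits into ``first iteration'' versus ``subsequent iteration'' cases.
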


\begin{proof}
If the algorithm outputs $\error$, it must be in Step~\ref{search-step}. By definition, an error occurs if $f(x_A)=1$ or $f(x_{A\cup S})=0$. The former cannot be true because the algorithm checks this exact condition in Step~\ref{binary-step}. The latter cannot be true because:
\begin{itemize}
    \item If this is the first iteration of Step~\ref{binary-step}, $A\cup S=S_{\xstar}$, which means $f(x_{A\cup S_{\xstar}})=f(\xstar)=1$.
    \item Else, in the previous iteration of Step~\ref{search-step} (let the values of $A,S,s$ at that step be $A', S', s'$ respectively), it must have been the case that $f(x_{A'\cup(S'\cap[s'])})=1$. Note that $A'\cup (S'\cap[s'])=A\cup(S'\cap[s'-1])=A\cup S$, and so $f(x_{A\cup S})=1$.
\end{itemize}
\end{proof}

\begin{lemma}
If the Algorithm in \ref{algo} terminates, it outputs a minimal certificate for $f$ on $\xstar$.
\end{lemma}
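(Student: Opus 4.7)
The plan is to first observe that termination gives $f(x_A) = 1 = f(\xstar)$, so $A$ is automatically a certificate for $\xstar$, and the real content lies in proving minimality: that removing any single element from $A$ breaks the certificate property.

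My approach is to track the elements of $A$ in the order the algorithm inserts them. Label them $s_1, s_2, \ldots, s_k$; since $S$ is updated to $S \cap [s-1]$ after each insertion, this sequence is strictly decreasing, and the set $S$ at the start of iteration $i$ equals $S_{\xstar} \cap [s_{i-1}-1]$ (using the convention $s_0 = n+1$). The definition of $\srch$, together with Lemma~\ref{no-error} ensuring the procedure never errors, yields the key invariant: $s_i$ is the \emph{smallest} element of the current $S$ with $f(x_{A \cup ([s_i] \cap S)}) = 1$, so
\[
f(x_{\{s_1, \ldots, s_{i-1}\} \cup (S_{\xstar} \cap [s_i - 1])}) = 0,
\]
where I have used $s_i < s_{i-1}$ to simplify $[s_i - 1] \cap S$ to $S_{\xstar} \cap [s_i - 1]$.

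With this invariant in hand, minimality reduces to a one-line monotonicity argument. Fix any $i \in [k]$. Every $s_j$ with $j > i$ belongs to $S_{\xstar}$ and satisfies $s_j < s_i$, so
\[
A \setminus \{s_i\} = \{s_1, \ldots, s_{i-1}\} \cup \{s_{i+1}, \ldots, s_k\} \subseteq \{s_1, \ldots, s_{i-1}\} \cup (S_{\xstar} \cap [s_i - 1]).
\]
Monotonicity of $f$ then forces $f(x_{A \setminus \{s_i\}}) = 0 \neq f(x_A)$, which is exactly the monotone form of the minimality condition in Definition~\ref{def:minimal_certificate}.

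I do not anticipate a genuine obstacle: the argument is an unrolling of the $\srch$ specification followed by a single monotonicity step. The one piece of bookkeeping that requires care is verifying that the ``local'' $\srch$ invariant at iteration $i$, stated in terms of the current $S$, can be re-expressed using $S_{\xstar} \cap [s_i - 1]$; this is what lets all later $s_j$ slot cleanly into the witness set on which $f$ evaluates to $0$, and is therefore the hinge of the whole proof.
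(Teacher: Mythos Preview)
Your proposal is correct and follows essentially the same approach as the paper: both arguments fix an element $s_i$ of $A$, use the minimality guarantee of $\srch$ at the iteration where $s_i$ was inserted to obtain $f(x_{A_{i-1}\cup(S_{\xstar}\cap[s_i-1])})=0$, observe that all later-inserted elements lie in $S_{\xstar}\cap[s_i-1]$, and conclude $f(x_{A\setminus\{s_i\}})=0$ by monotonicity. Your version is simply a more carefully indexed unpacking of the paper's proof.
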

\begin{proof}
It must be the case that $f(A)=1$; otherwise, we could not have left Step~\ref{binary-step}. Consider any $s\in A$ and we will show that $f(A\setminus s)=0$. 

At the iteration of Step~\ref{binary-step} where $s$ was added to $A$ (let the temporary certificate $A$ at the start of that step be $A_s$), it must be the case that $f(x_{(S\cap[s])\cup A_s})=1$ but $f(x_{(S\cap[s-1])\cup A_s})=0$. All future elements that are added to create the final certificate $A$ must be a subset of $S\cap [s-1]$ (where $S$ is being referenced from the current iteration of Step~\ref{binary-step}). Therefore, $A\setminus S \subseteq (S\cap [s-1])\cup A_s$, and therefore $f(A\setminus s)=0$.
\end{proof}

\begin{lemma}
The Algorithm in \ref{algo} terminates, making at most $O(C(f)\log{n})$ queries.
\end{lemma}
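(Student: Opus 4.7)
The plan is to first establish termination and then bound both the number of outer iterations and the number of queries per iteration. Termination is essentially immediate: each iteration of Step~\ref{binary-step} picks some $s \in S$ and then sets $S \gets S \cap [s-1]$, which removes $s$ from $S$ (and possibly more). Since $|S_{\xstar}| \le n$ is finite and $S$ strictly shrinks each iteration, the loop runs at most $n$ times, so the algorithm terminates. By Lemma~\ref{no-error}, no iteration aborts with $\error$, so termination proceeds through the normal exit condition $f(x_A) = 1$.

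Next, I would invoke the two preceding lemmas to tighten the iteration count. Once termination is established, the previous lemma guarantees that the output $A$ is a minimal certificate for $f$ on $\xstar$, and then Lemma~\ref{lemma:cf_minimal_certificate} yields $|A| \le C(f)$. Since each iteration of Step~\ref{binary-step} adds exactly one element to $A$, the total number of iterations is at most $C(f)$.

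Finally, I would account for the query cost of a single iteration. Each iteration issues one query to evaluate $f(x_A)$ in the loop guard, and one call to $\srch(f, A, S)$. The procedure $\srch$, by Definition~\ref{def:search}, performs a binary search over the ordered elements of $S \subseteq [n]$, using monotonicity of $f$ to decide which half to recurse on; this costs $O(\log |S|) = O(\log n)$ queries to $f$. Thus each iteration uses $O(\log n)$ queries, and the total query complexity is $O(C(f) \cdot \log n)$.

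The entire argument is essentially bookkeeping once the previous lemmas are in hand, so there is no real obstacle; the only subtlety worth stating carefully is the circular-sounding fact that the bound $|A| \le C(f)$ requires termination (from the monotone shrinking of $S$), which in turn is what lets us invoke minimality to conclude the iteration bound.
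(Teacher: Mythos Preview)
Your proof is correct and follows essentially the same approach as the paper: bound the number of loop iterations by $|A| \le C(f)$ via the minimality lemma, and bound each iteration's cost by $O(\log n)$ from binary search plus the loop-guard query. Your version is in fact slightly more careful than the paper's, which directly invokes $|A| \le C(f)$ to bound the iteration count without first separately establishing termination; you correctly flag this apparent circularity and resolve it by first observing that $S$ strictly shrinks each iteration, giving a crude $n$-iteration bound and hence termination before tightening to $C(f)$.
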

\begin{proof}
Observe that in every iteration of the main loop, we add exactly one element to $A$. By Lemma \ref{lemma:cf_minimal_certificate}, there are at most $C(f)$ coordinates in the output $A$. Hence, we run the main loop at most $C(f)$ times.

Next, $\srch(f, A, S)$ is a binary search over a domain of size $\abs{S} \le \abs{S_{\xstar}} \le n$. Therefore, $\srch(f, A, S)$ uses at most $\log n$ queries.

Finally, the check $f(\xstar_A) = 1$ costs $1$ query, and this runs at the beginning of every iteration of the loop. In total, we make at most $C(f) \cdot (\log n + 1)$ queries, as desired.
\end{proof}

Combining these lemmas finishes the proof of Theorem~\ref{mainthm}.
\section{Extension to Real-Valued Functions}

In this section, we prove the following corollary of our main result wherein the output domain is $\R$ instead of $\B$.

\begin{corollary}
There exists an algorithm that, given an input $\xstar$ and query access to a monotone $\myfunc{f}{\B^n}{\R}$, makes $O(C(f) \cdot \log n)$ queries to $f$ and outputs a size-$2\cdot C(f)$ certificate for $f(\xstar)$.
\end{corollary}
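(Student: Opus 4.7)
The plan is to reduce real-valued certification to two applications of the binary algorithm of \cref{sec:bin-search}: one producing a certificate that $f(\cdot)\geq v$ and one producing a certificate that $f(\cdot)\leq v$, where $v \defeq f(\xstar)$. Define
\[
g(x)\defeq \indicator{f(x)\geq v}, \qquad h(x)\defeq \indicator{f(x)>v}.
\]
Both $g,h : \B^n\to\B$ are monotone non-decreasing because $f$ is, with $g(\xstar)=1$ and $h(\xstar)=0$. Each query to $g$ or $h$ costs exactly one query to $f$, so any invocation of the algorithm of \cref{sec:bin-search} on $g$ or $h$ has the same asymptotic query complexity in terms of queries to $f$.

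First I would run the algorithm of \cref{sec:bin-search} on $g$ at $\xstar$, producing a set $S_1$. Then, to certify $h(\xstar)=0$, I would apply the complement trick from the preamble of \cref{sec:bin-search}: run the algorithm on $h'(z)\defeq 1-h(\onev^n-z)$ at input $\onev^n-\xstar$, on which $h'$ evaluates to $1$, and take the returned set $S_2$. The final output is $S\defeq S_1\cup S_2$. Correctness is immediate: any $y$ with $y\vert_S=\xstar\vert_S$ agrees with $\xstar$ on both $S_1$ and $S_2$ separately, so $g(y)=g(\xstar)=1$ gives $f(y)\geq v$ and $h(y)=h(\xstar)=0$ gives $f(y)\leq v$, hence $f(y)=v=f(\xstar)$.

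The only substantive step is bounding $C(g)$ and $C(h)$ by $C(f)$, since then \cref{mainthm} applied to each invocation yields $|S_1|,|S_2|\leq C(f)$ and a total query cost of $O(C(f)\log n)$, which gives $|S|\leq 2C(f)$ as needed. This bound is immediate pointwise: for any input $y$, every certificate $T$ for $f$ on $y$ satisfies $f(z)=f(y)$ for all $z$ with $z\vert_T=y\vert_T$, which forces $\indicator{f(z)\geq v}=\indicator{f(y)\geq v}$ and $\indicator{f(z)>v}=\indicator{f(y)>v}$, so $T$ is also a certificate for $g$ and $h$ on $y$; taking maxima gives $C(g),C(h)\leq C(f)$, and the bijection $y\leftrightarrow \onev^n-y$ yields $C(h')=C(h)$. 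I do not foresee a real obstacle: the conceptual content is simply that certifying equality of a real-valued monotone function at $\xstar$ decomposes into certifying the two monotone Boolean inequalities $f(\cdot)\geq v$ and $f(\cdot)\leq v$, each of which is in scope of \cref{mainthm}.
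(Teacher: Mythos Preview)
Your proposal is correct and essentially identical to the paper's proof: the paper defines $g_1(x)=\indicator{f(x)\ge f(\xstar)}$ and $g_0(x)=\indicator{f(x)>f(\xstar)}$ (your $g$ and $h$), applies the binary algorithm to each, and unions the two certificates, using the same pointwise argument that any certificate for $f$ is a certificate for $g_0,g_1$ to bound $C(g_0),C(g_1)\le C(f)$. The only cosmetic difference is that the paper packages the complement trick for the $g_0(\xstar)=0$ case inside its $\bcert$ subroutine, whereas you spell it out explicitly.
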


\subsection{Our Algorithm}\label{reals}

We begin with two necessary definitions.

\begin{definition}[$\bcert(f,\xstar)$]
The procedure $\bcert(f,\xstar)$ is given query access to function $f:\inbraces{0,1}^n\to\inbraces{0,1}$ and an input $\xstar$, runs our algorithm from Section~\ref{sec:bin-search}, and outputs a size-$C(f)$ certificate for $f(\xstar)$.
\end{definition}

\begin{definition}[$g_{0,f,\xstar}(x)$, $g_{1,f,\xstar}(x)$]
Let $b\in \inbraces{0,1}$, $f:\inbraces{0,1}^n\to\R$ and $\xstar\in\{0,1\}^n$. The function $g_{b,f,\xstar}:\inbraces{0,1}^n\to\inbraces{0,1}$ is defined as follows:
\[
g_b(x)=\begin{cases}
0 & f(x)<f(\xstar) \\
1 & f(x)>f(\xstar) \\
b & f(x)=f(\xstar)
\end{cases}
\]
We will abbreviate $g_{b,f,\xstar}:\inbraces{0,1}^n\to\inbraces{0,1}$ as $g_b$ when $f$ and $\xstar$ are clear.
\end{definition}

\protocol{Algorithm to Certify a Real-Valued Monotone Function}{reals}{
\begin{enumerate}
    \item \textbf{Input:} $f, \xstar$.
    \item Set $C_0 \gets \bcert(g_0,\xstar)$.
    \item Set $C_1 \gets \bcert(g_1,\xstar)$.
    \item \textbf{Output:} $C_0 \cup C_1$.
\end{enumerate}
}

\subsection{Analysis}

\begin{theorem}\label{thm:reals}
The Algorithm in \ref{reals} outputs a certificate for $f$ of length at most $2C(f)$ and makes at most $O(C(f)\log{n})$ queries.
\end{theorem}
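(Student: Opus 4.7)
My plan is to reduce the real-valued problem to two applications of the binary-valued algorithm from Section~\ref{sec:bin-search}, one per side of the inequality $f(y) = f(\xstar)$. The key observation is that $g_0$ certifies the event ``$f(y) \le f(\xstar)$'' and $g_1$ certifies ``$f(y) \ge f(\xstar)$,'' so their union certifies equality. I will organize the proof around three claims: (i) $g_0$ and $g_1$ are monotone binary functions with certificate complexity at most $C(f)$; (ii) the union $C_0 \cup C_1$ is a valid certificate for $f$ on $\xstar$; and (iii) the query complexity is $O(C(f)\log n)$.

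For (i), I would verify monotonicity of $g_b$ by a short case analysis: if $x \le y$ coordinatewise then $f(x) \le f(y)$ by monotonicity of $f$, and checking the three cases of $g_b(x) \in \{0,1\}$ against where $f(y)$ can lie shows $g_b(x) \le g_b(y)$ for $b\in\inbraces{0,1}$. For the certificate-complexity bound, fix any $x$ and any minimal certificate $S$ for $f$ on $x$, so that every $y$ agreeing with $x$ on $S$ satisfies $f(y) = f(x)$. Then $g_b(y)$ falls into the same case of its piecewise definition as $g_b(x)$, so $S$ is also a certificate for $g_b$ on $x$. This yields $C(g_b, x) \le C(f, x)$ for all $x$ and hence $C(g_b) \le C(f)$.

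For (ii), note that $g_0(\xstar) = 0$ and $g_1(\xstar) = 1$ by the tie-breaking convention. By Theorem~\ref{mainthm}, $C_0$ is a certificate for $g_0(\xstar) = 0$, so any $y$ agreeing with $\xstar$ on $C_0$ has $g_0(y) = 0$, equivalently $f(y) \le f(\xstar)$; symmetrically, $C_1$ forces $f(y) \ge f(\xstar)$. Any $y$ agreeing with $\xstar$ on $C_0 \cup C_1$ therefore satisfies both inequalities and hence $f(y) = f(\xstar)$, which is exactly what we need. The size bound $\abs{C_0 \cup C_1} \le \abs{C_0} + \abs{C_1} \le 2 C(f)$ follows from (i) combined with Theorem~\ref{mainthm} applied to each $g_b$.

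For (iii), each call to $\bcert$ costs $O(C(g_b) \cdot \log n) = O(C(f) \cdot \log n)$ queries to $g_b$, and each query to $g_b$ can be simulated by a single query to $f$ plus a comparison with the precomputed value $f(\xstar)$. Summing over $b \in \inbraces{0,1}$ gives the claimed $O(C(f) \log n)$ bound. The only mildly subtle point is the certificate-complexity transfer in (i), since one must be careful to bound $C(g_b)$ and not just $C(g_b, \xstar)$; the case analysis above handles this uniformly across all inputs.
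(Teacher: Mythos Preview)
Your proposal is correct and follows essentially the same approach as the paper: reduce to two calls of the binary algorithm via the threshold functions $g_0,g_1$, show that any certificate for $f$ is a certificate for $g_b$ (giving $C(g_b)\le C(f)$), and combine the two one-sided certificates into a two-sided one. If anything, your write-up is more complete than the paper's, which omits the explicit monotonicity check for $g_b$ and the query-complexity accounting that you spell out in (i) and (iii).
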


We break the proof into a series of lemmas. Call the output $A$.

\begin{lemma}
$A$ is a valid certificate for $f$ on $\xstar$.
\end{lemma}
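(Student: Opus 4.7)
The plan is to show that $C_0$ certifies the upper bound $f(y) \le f(\xstar)$ and $C_1$ certifies the lower bound $f(y) \ge f(\xstar)$ for any $y$ that agrees with $\xstar$ on the respective sets, so that their union pins down $f(y) = f(\xstar)$ exactly.

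First, I will verify that $g_0$ and $g_1$ are indeed monotone Boolean functions, so that the subroutine $\bcert$ is being called on valid inputs. Concretely, if $x \le y$ coordinatewise then $f(x) \le f(y)$ by monotonicity of $f$, and a short case analysis on which of the three regimes ($<$, $=$, $>$ relative to $f(\xstar)$) the values $f(x)$ and $f(y)$ fall in shows $g_b(x) \le g_b(y)$ for both $b \in \{0,1\}$. Next, I will record the values $g_0(\xstar) = 0$ and $g_1(\xstar) = 1$ directly from the definition.

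From Theorem~\ref{mainthm}, $C_0$ is a certificate for $g_0$ on $\xstar$, meaning every $y$ agreeing with $\xstar$ on $C_0$ satisfies $g_0(y) = 0$, which by definition of $g_0$ implies $f(y) \le f(\xstar)$. Analogously, every $y$ agreeing with $\xstar$ on $C_1$ satisfies $g_1(y) = 1$, hence $f(y) \ge f(\xstar)$. Any $y$ that agrees with $\xstar$ on $A = C_0 \cup C_1$ therefore satisfies both inequalities, giving $f(y) = f(\xstar)$, which is precisely the certificate condition.

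I do not anticipate a real obstacle here; the only step that requires care is checking monotonicity of $g_0$ and $g_1$, since these functions behave differently in the boundary case $f(x) = f(\xstar)$. The key observation that prevents any issue is that monotonicity of $f$ forbids the configuration $x \le y$ with $f(x) > f(\xstar) \ge f(y)$, so the only way the three regimes can interleave along a chain $x \le y$ is in the order $<, =, >$ of $f$-values, which is compatible with both choices of tiebreak $b$.
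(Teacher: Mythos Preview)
Your proposal is correct and follows essentially the same approach as the paper: use that $C_b$ certifies $g_b(\xstar)$, translate $g_0(y)=g_0(\xstar)=0$ into $f(y)\le f(\xstar)$ and $g_1(y)=g_1(\xstar)=1$ into $f(y)\ge f(\xstar)$, and combine. Your additional verification that $g_0,g_1$ are monotone is a welcome detail the paper leaves implicit (it is needed for $\bcert$ to be applicable), but the core argument is the same.
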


\begin{proof}
For any input $y$ such that $y\vert_A=\xstar\vert_A$, we must have $g_b(y)=g_b(\xstar)$ for both $b=0,1$. Notice that both of the following hold:
\begin{align*}
    g_0(y)=g_0(\xstar) &\text{ implying } f(y)\leq f(\xstar) \\
    g_1(y)=g_1(\xstar) &\text{ implying } f(y)\geq f(\xstar)
\end{align*}
Hence, we have $f(y)=f(\xstar)$.
\end{proof}

\begin{lemma}
$\abs{A}\leq 2C(f)$.
\end{lemma}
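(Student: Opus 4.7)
The plan is to show separately that each $C_b$ has size at most $C(f)$ and then bound $|A|$ by a simple union bound: $|A| = |C_0 \cup C_1| \leq |C_0| + |C_1| \leq 2C(f)$.

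To bound $|C_b|$, I will appeal to Theorem~\ref{mainthm} applied to $g_b$. For this, I first need to check that $g_0$ and $g_1$ are both monotone Boolean functions, so that $\bcert$ is applicable. This is straightforward: observe that $g_0(x) = 1$ iff $f(x) > f(\xstar)$ and $g_1(x) = 1$ iff $f(x) \geq f(\xstar)$. Then, since $f$ is monotone, in either case $x \preceq y$ coordinate-wise implies $f(x) \leq f(y)$, which preserves the defining inequality and gives $g_b(x) \leq g_b(y)$. With monotonicity verified, Theorem~\ref{mainthm} guarantees that $|C_b| \leq C(g_b)$.

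The main content is then the claim that $C(g_b) \leq C(f)$ for both $b$. I would argue this directly from the definitions: if $S$ is any certificate for $f$ on an input $x$, then every $y$ agreeing with $x$ on $S$ satisfies $f(y) = f(x)$, and in particular has the same relationship to $f(\xstar)$ as $x$ does. Hence $g_b(y) = g_b(x)$, so $S$ is also a certificate for $g_b$ on $x$. This gives $C(g_b, x) \leq C(f, x)$ pointwise, and taking the max over $x$ yields $C(g_b) \leq C(f)$.

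I do not anticipate any real obstacle here; the only thing to be careful about is that the implication $f$-certificate $\Rightarrow$ $g_b$-certificate uses only that $f$ determines $g_b$, and not any monotonicity argument, so the bound $C(g_b) \leq C(f)$ is clean. Once those two ingredients are in place, Theorem~\ref{thm:reals}'s query bound of $O(C(f) \log n)$ also follows, since each invocation of $\bcert$ uses $O(C(g_b) \log n) = O(C(f) \log n)$ queries by Theorem~\ref{mainthm}.
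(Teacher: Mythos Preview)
Your proposal is correct and follows essentially the same approach as the paper: both argue that any certificate for $f$ on an input $x$ is automatically a certificate for $g_b$ on $x$ (since $g_b$ factors through $f$), giving $C(g_b)\le C(f)$, and then invoke the guarantee of $\bcert$ to bound $|C_b|$. Your write-up is in fact slightly more careful than the paper's, since you explicitly verify the monotonicity of $g_b$ and spell out the union-bound step $|A|\le |C_0|+|C_1|$.
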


\begin{proof}
It suffices to show that $C(f)\geq C(g_b)$ for $b\in \B$. We will show that any certificate $B$ for $f$ on $x$ is also a certificate for $g_b$. For all $y,y'$ with $y\vert_C=y'\vert_C$, we have $f(y) = f(y')$, but this implies by definition that $g_b(y)=g_b(y')$. Hence, $B$ is also a certificate for $g_b$.
\end{proof}

Combining these lemmas concludes the proof of Theorem~\ref{thm:reals}.
\section{Finding the Shortest Certificate for a Monotone Function}

In this section, we show that there exists a family of instances on which the problem of finding the shortest certificate for a binary-valued $f$ on an input $\xstar$ (denoted $k\defeq C(f,\xstar)$) requires at least $\Omega\inparen{\binom{n}{k}}$ queries. 

Notice that this result is essentially optimal: for any function $f$, any input $\xstar$ and $k=C(f,\xstar)$, $O\inparen{\binom{n}{k}}$ suffice to find a size-$k$ certificate. Assuming $f(\xstar)=1$, the algorithm can simply query $f(x_S)$ for all subsets $S$ of size $k$ and check if each one of them is a certificate.

\begin{definition}[$F_k$] We define the set of $k$-indicator functions, denoted $F_k$ as follows.

Let $\myfunc{f_P}{\B^n}{\B}$ for some $P\subset[n]$ be defined as follows:
    \[
    f_P(x)=\begin{cases}
    0 & \abs{S_x}<k\\
    1 & \abs{S_x}>k\\
    0 & \abs{S_x}=k, P\neq S_x\\
    1 & \abs{S_x}=k, P=S_x
    \end{cases}
    \]
    Finally, let $F_k=\inbraces{f_P\suchthat \abs{P}=k}$.

\end{definition}

\begin{lemma}
\label{lemma:cert_complexity_fk}
Every function $f\in F_k$ has $C(f,\onev^n)=k$.
\end{lemma}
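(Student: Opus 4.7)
The plan is to prove the equality $C(f_P, \onev^n) = k$ by establishing both inequalities separately, reading each straight off the piecewise definition of $f_P$. As a preliminary step I would note that $f_P(\onev^n) = 1$, since $\abs{S_{\onev^n}} = n > k$ places $\onev^n$ in the second case of the definition (we may assume $n > k$, else the family is degenerate).

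For the upper bound $C(f_P, \onev^n) \le k$, the natural candidate certificate is $P$ itself, which has size $k$. To verify it, I would take an arbitrary $y$ with $y\vert_P = \onev^n\vert_P$, which means $y_i = 1$ for every $i \in P$, i.e.\ $P \subseteq S_y$. Split into two subcases. If $\abs{S_y} > k$, then $f_P(y) = 1$ by the second case. If $\abs{S_y} = k$, then since $P \subseteq S_y$ and $\abs{P} = k = \abs{S_y}$, a cardinality argument forces $S_y = P$, so $f_P(y) = 1$ by the fourth case. In both subcases $f_P(y) = f_P(\onev^n)$, so $P$ certifies $f_P(\onev^n)$.

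For the lower bound $C(f_P, \onev^n) \ge k$, I would argue by contradiction: suppose $S \subseteq [n]$ is a certificate with $\abs{S} < k$. Consider the input $y = x_S$, i.e.\ the indicator vector of $S$. By construction $y\vert_S = \onev^n\vert_S$ (both are identically $1$ on $S$), so the certificate property forces $f_P(y) = f_P(\onev^n) = 1$. However $\abs{S_y} = \abs{S} < k$ puts $y$ in the first case of the definition, giving $f_P(y) = 0$, a contradiction.

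Combining the two bounds finishes the proof. There is no real obstacle: the only mildly subtle point is the cardinality step that promotes $P \subseteq S_y$ to $P = S_y$ when $\abs{S_y} = k$, and this is immediate. The argument does not use monotonicity of $f_P$ directly, though monotonicity is implicit in how the cases fit together and is needed for this family to be a valid instance of the certification problem.
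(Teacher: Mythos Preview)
Your proof is correct and follows the same two-inequality decomposition as the paper, with $P$ witnessing the upper bound and $x_S$ for $\abs{S}<k$ witnessing the lower bound. The only minor difference is that the paper dispatches the upper bound via monotonicity (from $f(x_P)=1$ and the fact that any $y$ agreeing with $\onev^n$ on $P$ dominates $x_P$), whereas you verify the certificate property by direct case analysis on $\abs{S_y}$; both are equally valid.
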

\begin{proof}
It is easy to see that every function in $F_k$ is monotone for all $k$.

Let $P \subset [n]$ be such that $f_P = f$. Observe that $\abs{P} = k$. Next, notice that $f(\onev^n_P) = f(\onev^n) = 1$. This implies that $C(f,\onev^n) \le \abs{P} = k$. Finally, consider any $S \subset [n]$ such that $\abs{S} < k$. Note that for $x = \onev^n_S$, we have $\abs{S_x} < k$, so $f_P(x) = 0$. Thus, we have $C(f,\onev^n) \ge k$, and we're done.
\end{proof}

\begin{theorem}
For any $k \in [n-1]$ and (randomized) algorithm $\cA$, there exists a function $\myfunc{f}{\B^n}{\B}$ and input $\xstar$ with $C(f,\xstar)=k$ such that $\cA$ must make at least $\nfrac{1}{2}\cdot\binom{n}{k}$ queries to $f$ to find the size-$k$ certificate with probability $\ge\nfrac{1}{2}$.
\end{theorem}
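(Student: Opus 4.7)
The plan is to invoke Yao's minimax principle on the distribution obtained by taking $\xstar = \onev^n$ and $f = f_P$ for $P$ drawn uniformly from the size-$k$ subsets of $[n]$. By Lemma \ref{lemma:cert_complexity_fk}, each such $f_P$ satisfies $C(f_P,\onev^n) = k$, so this is a valid distribution of inputs meeting the theorem's hypothesis. The first preparatory step is to show that on this distribution, recovering a size-$k$ certificate is \emph{equivalent} to recovering the hidden set $P$: any size-$k$ certificate $S$ for $f_P(\onev^n) = 1$ must in particular satisfy $f_P(x_S) = 1$, and by definition of $f_P$ the equality $|S_{x_S}| = k$ together with $f_P(x_S) = 1$ forces $S_{x_S} = P$, i.e.\ $S = P$.

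Next I would characterize what each query reveals. A query $x$ with $\abs{S_x} \neq k$ has answer fully determined by $\abs{S_x}$, so it gives no information about $P$ and a deterministic algorithm can simulate such queries for free; without loss of generality we may assume every query is of size exactly $k$, and the response is $\indicator{S_x = P}$.

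The core of the argument is then the following pigeonhole step. Fix a deterministic algorithm $\cA'$ that makes at most $q$ such size-$k$ queries and outputs a set. Consider the path in $\cA'$'s decision tree traced out when every query returns $0$; let $Q$ denote the (at most $q$) size-$k$ subsets queried along this path, and let $O$ be the output at its leaf. If $P \notin Q$, then on input $f_P$ every query along this path genuinely returns $0$ (since $S_x \neq P$ for each $S_x \in Q$), so $\cA'$ follows exactly this path and outputs $O$; it succeeds only if $P = O$. Thus at most one value of $P$ outside $Q$ is certified correctly, and combined with the at most $\abs{Q}$ values of $P$ inside $Q$ on which $\cA'$ could conceivably succeed, the total number of $P$ on which $\cA'$ succeeds is at most $\abs{Q} + 1 \le q + 1$. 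Hence if $q < \tfrac{1}{2}\binom{n}{k}$, the success probability of $\cA'$ over a uniformly random $P$ is strictly less than $\tfrac{1}{2}$, contradicting (via Yao's principle) the assumed guarantee on the randomized algorithm $\cA$.

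The main obstacle I anticipate is formalizing the ``all-zero branch'' step, namely that $\cA'$ produces an identical transcript and output on every $P \notin Q$. This rests on a short induction on the depth of the decision tree, using that non-size-$k$ queries have $P$-independent answers and that size-$k$ queries on this branch return $0$ precisely when $P$ avoids $Q$. Once this is made precise, the counting bound $\abs{Q} + 1$ and the invocation of Yao are routine.
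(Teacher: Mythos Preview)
Your proposal is correct and complete; the all-zero branch argument for deterministic algorithms combined with Yao's principle is a clean way to handle this ``find the hidden element'' problem, and your reduction showing that the only size-$k$ certificate of $f_P$ at $\onev^n$ is $P$ itself is exactly right.

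The paper takes a somewhat different route. Rather than invoking Yao and analyzing a deterministic decision tree, it works directly with the randomized algorithm: after the same reduction to size-$k$ queries, it observes that any randomized certifier is equivalent to sampling a permutation of the $\binom{n}{k}$ candidate sets from some distribution $\cP$ and querying in that order until a $1$ is seen. A double-counting (averaging) argument then shows $\sum_{x} \prvv{P\sim\cP}{P^{-1}(x)\le \abs{X}/2} = \abs{X}/2$, so some target $x$ lies in the first half of the permutation with probability at most $\nfrac{1}{2}$; that $x$ furnishes the hard instance. Your approach is arguably more modular---Yao cleanly separates the randomization from the combinatorics---while the paper's permutation view is more direct and avoids introducing the decision-tree formalism. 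Both yield the same bound up to the same $\pm 1$ slack in the strict-versus-nonstrict inequality (your $q+1$ count and the paper's $\le \nfrac{1}{2}$ conclusion are equally loose here).
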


\begin{proof}
Fix an arbitrary $k \in [n-1]$. We will show that some function $f\in F_k$ takes $\ge\nfrac{1}{2}\cdot\binom{n}{k}$ queries to certify on the input $\onev^n$. Note that $C(f,\onev^n) = k$. Additionally, observe that any randomized algorithm to find a certificate for $f(\onev^n)=1$ can be converted to one that only makes queries $x$ satisfying $\abs{S_x} = k$. 

Let $X=\inbraces{x\in\B^n \suchthat \abs{S_x} = k}$. Notice that $\abs{X} = \binom{n}{k}$. Any randomized algorithm for finding the single $x \in X$ such that $f(x)=1$ can be viewed as one that samples a permutation from some distribution over permutations of $X$ and makes queries to $f$ in the order determined by the permutation until the algorithm encounters the $x\in X$ for which $f(x) = 1$. This is because query $i$ only depends on the values of the queries $1, \dots, i - 1$, and not their responses -- in particular, the responses to queries $1, \dots, i - 1$ are all $0$ if the algorithm has not terminated prior to issuing query $i$. With this interpretation in mind, fix some distribution of permutations of $X$; call this distribution $\cP$. 

For each $x\in X$, consider $\prvv{P\in \cP}{P^{-1}(x)\le\nfrac{\abs{X}}{2}}$ where $P^{-1}(x)$ is the index of element $X$. Let $\mu(P)$ denote the probability that a random permutation drawn from $\cP$ is $P$, and observe the following manipulations:
\begin{align*}
    \sum_{x\in X} \prvv{P\in \cP}{P^{-1}(x)\le\nfrac{\abs{X}}{2}} &= \sum_{x\in X} \sum_{P} \mu(P) \cdot \indicator{P^{-1}(x) \le \nfrac{\abs{X}}{2}}\\
    &= \sum_{P} \mu(P) \cdot \sum_{x \in X} \indicator{P^{-1}(x) \le \nfrac{\abs{X}}{2}}\\
    &= \sum_{P} \mu(P) \cdot \frac{1}{2}\cdot\abs{X} = \frac{1}{2}\cdot\abs{X}
\end{align*}
Thus, there exists at least one $x \in X$ for which $\prvv{P\in \cP}{P^{-1}(x)\le \nfrac{\abs{X}}{2}} \le \nfrac{1}{2}$. It follows that the algorithm does not find a sized-$k$ subset of $\onev^n$ certifying $f(\onev^n)=1$ with probability $>\frac12$ without making at least $\frac12\binom{n}{k}$ queries.
\end{proof}

\nocite{*}
\printbibliography

\end{document}